\begin{document}

\title{Construction of APN permutations via Walsh zero spaces}
\author{Benjamin Chase\\ Petr Lison\v{e}k\thanks{Research of both authors was supported
in part by the Natural Sciences and Engineering Research Council of Canada
(NSERC).}
\\
Department of Mathematics\\
Simon Fraser University\\
Burnaby, BC\\
Canada\ \ \ V5A 1S6\\
\ \\
{\tt plisonek@sfu.ca}
}
\date{}

\def\eps{\varepsilon}
\def\F{{\mathbb F}}
\def\Q{{\mathbb Q}}
\def\C{{\mathbb C}}
\def\Z{{\mathbb Z}}
\def\Fn{{\mathbb F_{2^n}}}
\def\e{\eta}

\newcommand{\Ft}{\mathbb{F}_{2}}
\newcommand{\W}{{\mathcal W}}
\newcommand{\LL}{{\mathcal L}}

\def\P{\phantom{\Big|}}

\def\r#1 {{\color{red} #1} }
\def\wt{{\rm wt}}

\newtheorem{theorem}{Theorem}[section]
\newtheorem{lemma}[theorem]{Lemma}
\newtheorem{proposition}[theorem]{Proposition}
\newtheorem{corollary}[theorem]{Corollary}
\newtheorem{conjecture}[theorem]{Conjecture}
\theoremstyle{definition}
\newtheorem{definition}[theorem]{Definition}
\newtheorem{example}[theorem]{Example}
\newtheorem{remark}[theorem]{Remark}
\newtheorem{problem}[theorem]{Problem}

\def\ra{\rightarrow}

\def\gf{\F}
\def\tr{{\rm Tr}}
\def\Tr{{\rm Tr}}

\def\t{\theta}

\def\w{\omega}

\maketitle

\begin{abstract}
	A Walsh zero space (WZ space) for $f:\Fn\rightarrow\Fn$
    is an $n$-dimensional vector subspace of $\Fn\times\Fn$
    whose all nonzero elements are Walsh zeros of $f$.
    We provide several theoretical and computer-free
    constructions of WZ spaces for Gold APN functions $f(x)=x^{2^i+1}$ on $\Fn$
    where $n$ is odd and $\gcd(i,n)=1$.
    We also provide several
    constructions of trivially intersecting
    pairs of such spaces.
    We illustrate applications
    of our constructions that include constructing
    APN permutations that are CCZ equivalent to $f$
    but not extended affine equivalent to $f$ or
    its compositional inverse.
\end{abstract}

\section{Introduction}

    One of the most important open problems concerning APN permutations is their existence in even dimensions. Dillon et al.~\cite{BDMW} constructed an APN permutation of $\F_{2^6}$ in 2009 but since then, no new APN permutations in even dimension have been found. The question of existence of APN permutations in even dimension greater than six is known as the Big APN Problem.

    For odd dimensions, there are six known infinite families 
    of monomial APN permutations, see Table 1 of \cite{BCC}. 
    An important open problem is the conjecture by Dobbertin \cite{HD}
    that any monomial APN function is CCZ-equivalent to a 
    permutation belonging to one of these known families.  

    Currently, all known APN permutations in odd dimensions, 
    up to CCZ equivalence, belong to a monomial family or one of the following families:
    (1) An infinite family of quadratic APN permutations in odd 
    dimension found in 2008 by Beierle, Carlet, and Leander~\cite{BCL}.
    (2) Two quadratic APN permutations in dimension~$9$ found in 2020 by Beierle and Leander~\cite{BL}.

    As such,
    it is certainly still desirable to find new constructions
    of APN permutations in odd dimensions as well.
    This is one of the main goals of this paper.
    The other goal are theoretical descriptions
    of Walsh zero (WZ) spaces of functions.
    Some WZ spaces of Gold functions in odd dimensions
    are constructed
    computer-free, inspired by concrete examples
    in low dimensions. 
    WZ spaces already occur implicitly in the work
    of Dillon et al.~\cite{BDMW}; recently their
    importance and applications were recognized more
    explicitly, see \cite{CP} and references therein.

\section{Preliminaries}
    Let $\Fn$ denote the finite field with $2^n$~elements
    and let $\F_{2^n}^*$ denote the set of its nonzero elements.
    A function $f:\F_{2^n}\rightarrow\F_{2^n}$
    is {\em almost perfect nonlinear (APN)} if for all
    $a,b\in\F_{2^n}$, $a\neq 0$, the equation
    $f(x+a)-f(x)=b$ has at most two solutions $x\in\F_{2^n}$.
    Without loss of generality, we can normalize
    any APN function such that $f(0)=0$, and we will assume
    this throughout.
    
    APN functions,
    and more generally functions
    with low differential uniformity, 
    have been extensively studied
    due to their importance in the design of 
    S-boxes of block ciphers in cryptography,
    where they offer the best possible protection
    against differential cryptanalysis.
    In some block cipher designs, such as substitution-permutation networks (SPN)
    \cite[Chapter~4]{StPa},
    it is required that S-boxes are invertible mappings.
    Of special interest are 
    therefore APN functions which are invertible, that is,
    they are {\em permutations} of $\F_{2^n}$.
    Constructing new APN permutations
    of $\F_{2^n}$ is one of the objectives of
    our work.  
    
    Let $\Tr^n_m$ denote the trace function from $\Fn$ to $\F_{2^m}$,
    and let $\Tr$ denote the absolute trace function from $\Fn$ to $\F_2$.
    Let $f$ be a function from $\Fn$ to $\Fn$.
    For $(a,b)\in\Fn\times\Fn$ 
    we define the Walsh transform of $f$ at $(a,b)$
    as $\W_f(a,b)=\sum_{x\in\Fn} (-1)^{\Tr(ax+bf(x))}$.
    We say that $(a,b)$ is a {\em Walsh zero} of $f$
    if $\W_f(a,b)=0$. 
    
    \begin{definition}
        Let $f$ be a function from $\Fn$ to $\Fn$.
        Suppose that $S$ is an $\Ft$-linear subspace of $\Fn\times\Fn$
        such that $\dim_{\Ft} S=n$ and each element of $S$ except $(0,0)$
        is a Walsh zero of $f$. We say that $S$ is a {\em WZ space} of~$f$.
    \end{definition}
    
    We say that two WZ spaces $S,T$ of the same function
    {\em intersect trivially} if $S\cap T=\{(0,0)\}$.

    We will need some background from coding theory. 
    The Hamming weight $\wt(x)$ of a vector $x\in \F_2^n$ is the number of 
    nonzero coordinates in $x$.
    A binary linear code $C$ with parameters $[n,k,d]$ is an $\F_2$-linear subspace of $\F_2^n$ where $\dim_{\F_2}C = k$ and $d$ is the minimum Hamming weight of nonzero codewords in $C$. We say that $d$ is the distance of $C$.
    A generator matrix for $C$ is a $k\times n$ matrix whose rows form a basis for $C$.  
    Permuting the coordinates of a  code $C$ will produce a code $C'$ with identical parameters. We say that such codes $C$ and $C'$ are equivalent. 

        For a positive integer $r$, the binary simplex code $\mathcal{S}_r$ is a binary linear code with parameters $[2^r-1,r,2^{r-1}]$.
    It is well known that
    this definition is sound, that is, the parameters
    determine the simplex code uniquely up to equivalence.
Moreover
     for every nonzero $x\in S_r$ the Hamming weight of $x$ is $2^{r-1}$.
    A generator matrix for the code $S_r$ is an $r\times (2^r - 1)$ matrix whose columns consist of all distinct nonzero vectors of $\F_2^r$. 
    
    The {\em CCZ equivalence} of functions was introduced
    by Carlet, Charpin and Zinoviev in \cite{CCZ}.
    It has many important features, in particular it preserves
    the APN property. 
    Dillon et al.\ introduced in \cite{BDMW} a method that,
    assuming certain conditions are satisfied, constructs
    a permutation that is CCZ equivalent to a given function.
    In the following proposition
    we present this method in a different but equivalent form,
    using the concept of WZ spaces. We also include a proof
    of the proposition, which is contained only implicitly in \cite{BDMW},
    because it allows one to {\em explicitly construct} a permutation CCZ
    equivalent to the given function. 
    
    \begin{proposition}\label{prop-wz-disj} 
        Let $f$ be a function from $\Fn$ to $\Fn$. If there exist two WZ spaces of $f$ that intersect trivially, then $f$ is CCZ-equivalent to a permutation of $\Fn$.
    \end{proposition}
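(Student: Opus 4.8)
The plan is to make the CCZ equivalence explicit by exhibiting a \emph{linear} permutation of $\Fn\times\Fn$ that maps the graph of $f$ onto the graph of a permutation, and to recognize, via the Walsh transform, exactly when a linear image of a graph is again the graph of a function, and of a permutation. Throughout I will use that $f$ and $g$ are CCZ-equivalent if and only if their graphs $G_f=\{(x,f(x)):x\in\Fn\}$ and $G_g=\{(x,g(x)):x\in\Fn\}$ are affine equivalent, and that producing a linear equivalence is enough.

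First I would record the standard Fourier criterion: a map $h$ from a set of size $2^n$ to $\Fn$ is a bijection if and only if $\sum_{d}(-1)^{\Tr(c\,h(d))}=0$ for every $c\in\Fn^{*}$; one direction is immediate, the other follows from the identity $|h^{-1}(y)|=1+2^{-n}\sum_{c\neq 0}(-1)^{\Tr(cy)}\sum_{d}(-1)^{\Tr(c\,h(d))}$. I will also identify the linear functionals on $\Fn\times\Fn$ with points of $\Fn\times\Fn$ via the nondegenerate symmetric form $\langle(a,b),(x,y)\rangle=\Tr(ax+by)$; the functional attached to $(a,b)$ restricts on $G_f$ to $x\mapsto\Tr(ax+bf(x))$, so it is balanced on $G_f$ precisely when $(a,b)$ is a Walsh zero of $f$.

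Next I would fix a linear permutation $L$ of $\Fn\times\Fn$ with coordinate projections $\pi_1,\pi_2$ and analyze $L(G_f)$. It is the graph of a function $g$ iff $\pi_1$ is injective on $L(G_f)$, iff $(\lambda\circ\pi_1\circ L)|_{G_f}$ is balanced for every nonzero functional $\lambda$ on the first copy of $\Fn$. As $\lambda$ varies, $\lambda\circ\pi_1$ runs through the nonzero functionals vanishing on $\{0\}\times\Fn$, and composing with the bijection $L$ these run through the nonzero functionals vanishing on $L^{-1}(\{0\}\times\Fn)$; under the identification above these correspond to the nonzero vectors of the $n$-dimensional space $U:=\bigl(L^{-1}(\{0\}\times\Fn)\bigr)^{\perp}$. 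Hence $L(G_f)$ is the graph of a function iff $U$ is a WZ space of $f$, and the same computation with $\pi_2$ shows that this $g$ is a permutation iff $V:=\bigl(L^{-1}(\Fn\times\{0\})\bigr)^{\perp}$ is a WZ space of $f$. Since $L^{-1}(\{0\}\times\Fn)$ and $L^{-1}(\Fn\times\{0\})$ are complementary $n$-dimensional subspaces, so are $U$ and $V$, because $U\cap V=\bigl(L^{-1}(\{0\}\times\Fn)+L^{-1}(\Fn\times\{0\})\bigr)^{\perp}=\{(0,0)\}$.

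To finish, given trivially intersecting WZ spaces $S,T$ of $f$, I would set $U=S$ and $V=T$; their orthogonals $S^{\perp},T^{\perp}$ are then also complementary $n$-dimensional subspaces (since $S^{\perp}\cap T^{\perp}=(S+T)^{\perp}=\{(0,0)\}$ and dimensions add), so I can choose a linear permutation $L$ of $\Fn\times\Fn$ with $L^{-1}(\{0\}\times\Fn)=S^{\perp}$ and $L^{-1}(\Fn\times\{0\})=T^{\perp}$. By the previous step $L(G_f)$ is the graph of a permutation $g$, and $G_g=L(G_f)$ with $L$ linear exhibits the required CCZ equivalence. I expect the only real obstacle to be the careful bookkeeping of how linear functionals transform under $L$ and under taking orthogonals; the rest is the standard Fourier criterion for bijectivity together with elementary linear algebra, and it is precisely the freedom in the choice of $L$ that makes the construction explicit.
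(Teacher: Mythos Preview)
Your argument is correct. At its core it is the same mechanism as the paper's proof---a Walsh zero space is precisely a set of linear functionals that are balanced on the graph $G_f$, and two complementary such spaces force two complementary linear projections of $G_f$ to be bijections, so the image is the graph of a permutation---but the packaging is genuinely different. The paper works concretely: it fixes dual bases, writes down an explicit $(2n)\times(2^n-1)$ generator matrix $G'$ whose rows are indexed by bases of the two WZ spaces, and then invokes the simplex-code characterization (every nonzero codeword has weight $2^{n-1}$) to conclude that the columns of each $n\times(2^n-1)$ block are pairwise distinct nonzero vectors, hence the columns of $G'$ are $(x,g(x))$ for a permutation $g$. You instead work abstractly with the trace form, orthogonal complements, and the Fourier criterion for bijectivity, and you identify exactly which linear permutations $L$ of $\Fn\times\Fn$ send $G_f$ to the graph of a permutation: those for which $(L^{-1}(\{0\}\times\Fn))^{\perp}$ and $(L^{-1}(\Fn\times\{0\}))^{\perp}$ are WZ spaces. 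Your route makes the ``if and only if'' structure transparent and avoids the detour through coding theory; the paper's route is slightly more explicit in that one can read off $g$ column by column from $G'$, which was the authors' stated reason for including the proof.
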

    \begin{proof}
    	Without loss of generality we can assume
    	that $f(0)=0$, since this can be achieved by
    	an affine transformation which is a special kind
    	of CCZ equivalence.
    	
    	Let $\{\alpha_1,\ldots,\alpha_n\}$ and $\{\beta_1,\ldots,\beta_n\}$ be two dual bases
    	of $\F_{2^n}$ over $\F_2$, that is,
    	$\Tr(\alpha_i\beta_j)=\delta_{i,j}$
    	where for $1\le i,j\le n$ we let $\delta_{i,j}=1$
    	if $i=j$ and $\delta_{i,j}=0$ otherwise.
        Let the nonzero elements of $\F_{2^n}$
        be labelled $x_1,\ldots,x_{2^n-1}$. 
         
        Let $G=\left(\begin{array}{c} G_1\\ G_2\end{array}\right)$ 
        be a $(2n)\times(2^n-1)$ matrix over $\F_2$
        where $G_1$ and $G_2$ are
        $n\times(2^n-1)$ matrices over $\F_2$
        defined as follows. The entry in row $i$
        and column $j$ of $G_1$ is $\Tr(\alpha_ix_j)$. 
         The entry in row $i$
        and column $j$ of $G_2$ is $\Tr(\alpha_i f(x_j) )$. 
        Then the $j$-th column of $G$ is of the form
        $\left(
        \begin{array}{c}
        x_j \\ f(x_j)
        \end{array}\right)
        $ where $x_j$ and $f(x_j)$ are represented
        as $n$-dimensional column vectors with respect
        to the basis $\{\beta_1,\ldots,\beta_n\}$
        introduced above. Let $C$ be the binary linear
        code which is the row space of $G$. Each
        codeword of $C$ is of the form
        $\Tr(rx_j+sf(x_j))_{j=1,\ldots, 2^n-1}$
        where $r,s$ are fixed elements of $\F_{2^n}$.

        Let $S$ and $T$
        be the two given trivially intersecting WZ spaces,
        and let
        $B_1=\{(a_1,b_1),\ldots,(a_n,b_n)\}$ and $B_2=\{(a_{n+1},b_{n+1}),\ldots,(a_{2n},b_{2n})\}$
        be their bases. 
        Note that  $B_1\cup B_2$
        is a basis for $\F_{2^n}\times\F_{2^n}$. 
        
        Let $G'=\left(\begin{array}{c} G'_1\\ G'_2\end{array}\right)$ 
        be a $(2n)\times(2^n-1)$ matrix over $\F_2$
        where $G'_1$ and $G'_2$ are
        $n\times(2^n-1)$ matrices over $\F_2$
        defined as follows. The entry in row $i$
        and column $j$ of $G'_1$ is $\Tr(a_ix_j+b_i f(x_j))$. 
        The entry in row $i$
        and column $j$ of $G'_2$ is 
        $\Tr(a_{n+i}x_j+b_{n+i} f(x_j))$. 
        For $i=1,2$ let $C_i'$ be the row space of $G_i'$.   
        Since $S$ and $T$ are WZ spaces for $f$,
        each nonzero codeword of $C_i'$ has weight $2^{n-1}$
        for $i=1,2$. Thus $C_1'$, $C_2'$ are simplex codes $S_n$,
        and
        $G_1'$, $G_2'$ 
        are formed by pairwise distinct nonzero columns.
        Thus, with respect to the basis $\{\beta_1,\ldots,\beta_n\}$,
         the columns of $G'$ can be viewed as
        $\left(
        \begin{array}{c}
        x \\ g(x)
        \end{array}\right)$
        where $x$ runs through all nonzero elements of $\F_{2^n}$
        and $g(x)\in\F_{2^n}$.
        After letting $g(0)=0$ we
        see that $g$ is a permutation of $\F_{2^n}$.
        
        Let $C'$ be the rowspace of $G'$. 
        Since $B_1\cup B_2$
        is a basis for $\F_{2^n}\times\F_{2^n}$,
        each
        codeword of $C'$ is of the form
        $\Tr(rx_j+sf(x_j))_{j=1,\ldots, 2^n-1}$
        where $r,s$ are fixed elements of $\F_{2^n}$.
        Thus the codes $C$ and $C'$ are equal,
        and functions $f$ and $g$ are CCZ equivalent
        by \cite{Dlong}, see also paragraph~4 on page 380
        in~\cite{Car21}.

    \end{proof}

\section{Walsh zeros of Gold functions}

    In this section we characterize Walsh zeros of Gold APN permutations.
    That is, we work in odd dimension throughout. We also define and give some results on certain additive subspaces of $\F_{2^n}$ which we call \emph{compatible subspaces}. These spaces appear
    while studying Walsh zero spaces of Gold APN permutations. 

    The following proposition  was first proved implicitly by 
     Gold in \cite{Gold1968}.

    \begin{proposition}\label{WZtest}
    Suppose $n$ is odd and $f:\F_{2^n}\rightarrow \F_{2^n}$ with $f(x) = x^{2^i + 1}$ such that $\gcd(i,n) = 1$ (so $f$ is a Gold APN function). Then $(a,b)\in\Fn\times\Fn$ is a Walsh zero of $f$ if and only if $\Tr(ab^{-\frac{1}{2^i + 1}}) = 0$ or $a\ne b = 0$.
    \end{proposition}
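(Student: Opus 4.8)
The plan is to compute $\W_f(a,b)$ directly, splitting on whether $b=0$. If $b=0$, then $\W_f(a,0)=\sum_{x\in\Fn}(-1)^{\Tr(ax)}$, which equals $2^n$ for $a=0$ and $0$ for $a\neq0$; hence $(a,0)$ is a Walsh zero exactly when $a\neq0$, which is the alternative ``$a\neq b=0$'' (when $b=0$ only this alternative is available). So assume $b\neq0$ from now on. Because $n$ is odd and $\gcd(i,n)=1$, one has $\gcd(2^i+1,2^n-1)=1$: indeed $2^i+1$ divides $2^{2i}-1$, and $\gcd(2^{2i}-1,2^n-1)=2^{\gcd(2i,n)}-1=2^{\gcd(i,n)}-1=1$. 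Thus $b^{-1/(2^i+1)}$ denotes a well-defined element of $\Fn^*$, and moreover $z\mapsto z^{2^{2i}-1}$ is a permutation of $\Fn^*$; both facts will be used. Write $q_b(x)=\Tr(bx^{2^i+1})$, so that $\W_f(a,b)=\sum_{x\in\Fn}(-1)^{\Tr(ax)+q_b(x)}$; since $q_b$ is an $\Ft$-quadratic form this is a Walsh transform of a quadratic function, which I would analyze by standard means.

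Next I would compute the polar (symmetric bilinear) form of $q_b$. Expanding $(x+y)^{2^i+1}=x^{2^i+1}+x^{2^i}y+xy^{2^i}+y^{2^i+1}$ and using that $\Tr$ is invariant under $z\mapsto z^{2^{-i}}$ to rewrite $\Tr(bxy^{2^i})=\Tr(b^{2^{-i}}x^{2^{-i}}y)$, one gets $B_b(x,y):=q_b(x+y)+q_b(x)+q_b(y)=\Tr\!\bigl((bx^{2^i}+b^{2^{-i}}x^{2^{-i}})\,y\bigr)$. By nondegeneracy of the trace form on $\Fn$, the radical $\{x:B_b(x,y)=0\ \forall y\}$ equals the kernel of the linearized polynomial $L_b(x)=bx^{2^i}+b^{2^{-i}}x^{2^{-i}}$. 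Raising $L_b(x)=0$ to the $2^i$-th power yields $b^{2^i}x^{2^{2i}}=bx$, i.e.\ $x^{2^{2i}-1}=b^{1-2^i}$ for $x\neq0$; since $z\mapsto z^{2^{2i}-1}$ is a bijection of $\Fn^*$ there is exactly one nonzero solution, and a direct check shows it is $x_0:=b^{-1/(2^i+1)}$ (because $x_0^{2^{2i}-1}=b^{-(2^{2i}-1)/(2^i+1)}=b^{1-2^i}$). Hence $\mathrm{rad}(B_b)=\{0,x_0\}$ has dimension~$1$.

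Finally I would use the usual squaring trick: substituting $y=x+z$ and using $q_b(x)+q_b(x+z)=q_b(z)+B_b(x,z)$ gives $\W_f(a,b)^2=\sum_{z\in\Fn}(-1)^{\Tr(az)+q_b(z)}\sum_{x\in\Fn}(-1)^{B_b(z,x)}$, and the inner sum is $2^n$ when $z\in\mathrm{rad}(B_b)$ and $0$ otherwise; therefore $\W_f(a,b)^2=2^n\sum_{z\in\{0,x_0\}}(-1)^{\Tr(az)+q_b(z)}=2^n\bigl(1+(-1)^{q_b(x_0)+\Tr(ax_0)}\bigr)$. Here is the crucial step: $q_b(x_0)=\Tr(bx_0^{2^i+1})=\Tr(b\cdot b^{-1})=\Tr(1)=1$, the last equality holding precisely because $n$ is odd. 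Consequently $\W_f(a,b)^2=2^n\bigl(1-(-1)^{\Tr(ab^{-1/(2^i+1)})}\bigr)$, which is $0$ exactly when $\Tr(ab^{-1/(2^i+1)})=0$ and equals $2^{n+1}$ otherwise. Combining with the case $b=0$ gives the stated characterization.

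The only real obstacle is the bookkeeping around the radical: one must check both that $L_b$ has a nonzero root and that it has only one, which is where the hypothesis that $n$ is odd (via $\gcd(2^{2i}-1,2^n-1)=1$) first enters; it then enters again, decisively, through $\Tr(1)=1$, which is also the reason $f$ is a permutation in the first place. Everything else is the routine theory of Boolean quadratic forms, and one could equally quote the classification of such forms by the rank of the polar form and the restriction of the form to its radical instead of running the squaring argument by hand.
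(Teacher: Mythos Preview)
Your proof is correct and takes a genuinely different route from the paper's. The paper follows Lahtonen--McGuire--Ward: it first treats $b=1$ by the affine substitution $y=z+c$, rewriting $\W_f(A,1)=\chi(Ac+c^{2^i+1})\sum_z\chi(z^{2^i+1}+z(L(c)+A))$ with $L(c)=c^{2^i}+c^{2^{-i}}$, and then chooses $c$ so that the linear part is either $0$ (when $\Tr(A)=0$, giving $\W_f(A,1)=0$ since $z\mapsto z^{2^i+1}$ is a permutation) or $z$ (when $\Tr(A)=1$, reducing to the explicit value of $\W_f(1,1)$ quoted from \cite{LMW}); the case of general $b\neq0$ is then obtained by the rescaling $z=b^{1/(2^i+1)}x$. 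You instead treat all $b\neq0$ at once via the quadratic-form viewpoint: compute the polar form of $q_b$, identify its one-dimensional radical $\{0,b^{-1/(2^i+1)}\}$, and apply the squaring identity to get $\W_f(a,b)^2=2^n\bigl(1-(-1)^{\Tr(ab^{-1/(2^i+1)})}\bigr)$, using $\Tr(1)=1$. Your argument is fully self-contained and avoids invoking the exact value of $\W_f(1,1)$; the paper's argument, in exchange, recovers the sign of $\W_f(a,b)$ in the nonzero case, which your squaring only determines up to $\pm$. For the purposes of Proposition~\ref{WZtest} the two are equally adequate.
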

    \begin{proof}
        First we present a proof given in \cite{LMW} which covers the case $b=1$.
        We then extend it to $b\neq 1$. The notation $\chi(y)=(-1)^{\Tr(y)}$ is introduced only to improve readability. Let $f:\F_{2^n}\rightarrow\F_{2^n}$ be given by $f(y) = y^{2^i+1}$ with $\gcd(i,n)=1$. Then for any $c\in\F_{2^n}$,
        \begin{align*}
            \mathcal{W}_f(A,1) &= \sum_{y\in\F_{2^n}}(-1)^{\Tr(Ay + f(y))}\\ 
            &= \sum_{z\in\F_{2^n}}\chi((z + c )^{2^i+1} + A(z + c))\\
            &= \sum_{z\in\F_{2^n}}\chi(z^{2^i+1} + cz^{2^i} + c^{2^i}z + c^{2^i+1} + Az + Ac)\\
            &= \chi(Ac + c^{2^i+1})\sum_{z\in\F_{2^n}}\chi(z^{2^i+1} + c^{2^{-i}}z + c^{2^i}z + Az)\\
            &= \chi(Ac + c^{2^i+1})\sum_{z\in\F_{2^n}}\chi(z^{2^i+1} + z(L(c) + A))
        \end{align*}
        where $L(c) = c^{2^i} + c^{2^{-i}}.$

        It is clear that $L$ is linear and that $\Tr(L(c))=0$ for all $c$. Since the kernel of $L$ is $\{ 0,1 \}$, the image of $L$ contains all elements of $\F_{2^n}$ with trace $0$. 
        So if $\Tr(A) = 0$ we can chose $c$ such that $A=L(c)$.
        With this choice of $c$, we have 
        $$W_f(A,1) = \chi(Ac + c^{2^i+1})\sum_{z\in\F_{2^n}}\chi(z^{2^i+1}) = 0$$
        where $\sum_{z\in\F_{2^n}}\chi(z^{2^i+1}) = 0$ since $z^{2^i+1}$ is a permutation of $\F_{2^n}$ when $n$ is odd.

        If $\Tr(A) = 1$, choose $c$ so that $L(c) = A + 1$. Then
        $$W_f(A,1) = \chi(Ac + c^{2^i+1})\sum_{z\in\F_{2^n}}\chi(z^{2^i+1} + z) = \chi(Ac + c^{2^i+1})W_f(1,1).$$ The following is proved as Theorem 5 of \cite{LMW},
        $$W_f(1,1) = \begin{cases}
            +2^{(n+1)/2} & \text{if }n\equiv \pm 1\text{ (mod 8)}\\ 
            -2^{(n+1)/2} & \text{if }n\equiv \pm 3\text{ (mod 8)} .
        \end{cases}$$
        So $W_f(A,1)$ is nonzero when $\Tr(A) = 1$. Therefore $\mathcal{W}_f(A,1)=0$ if and only if $\Tr(A)=0$.

        This result is extended to $b\neq 0, 1$ with a change of variables. Let $z=b^\frac{1}{2^i+1}x$. Then \begin{align*}
            \mathcal{W}_f(A,1) &= \sum_{z\in\F_{2^n}}\chi(Az + f(z))\\
            &= \sum_{x\in\F_{2^n}}\chi(Ab^\frac{1}{2^i+1}x + bf(x)) \\
            &= \mathcal{W}_f(a,b) 
        \end{align*}
        where $A$ is chosen such that $a=Ab^\frac{1}{2^i+1}$. The condition $\Tr(A)=0$ becomes $\Tr(ab^{-\frac{1}{2^i+1}})=0$. Finally, if $a\ne b = 0$ then $\mathcal{W}_f(a,b)=\sum_{x\in\F_{2^n}}\chi(ax)=0.$

    \end{proof}

    If it is clear from the context that $f$ is a Gold APN permutation, 
    then we refer to Proposition~\ref{WZtest} simply 
    as \emph{the Walsh zero test}.

    Suppose $f:\F_{2^n} \rightarrow \F_{2^n}$.
    Since $\Tr(ax)$ is a balanced function of $x$, 
    the space $$Z_{a0} = \{(a,0) : a\in\F_{2^n} \}$$ 
    is a WZ space of $f$. Furthermore, if $f$ is a permutation and $b\neq 0$
    then $\Tr(bf(x))$ is a balanced function 
    of $x$, hence and $$Z_{0b} =  \{(0,b) : b\in\F_{2^n} \}$$ is 
    also a WZ space of $f$. 
    We call these two spaces \emph{trivial} WZ spaces.

    WZ spaces have been recently appearing in literature.
    In Theorem 4 of \cite{CP}, Canteaut and Perrin prove that 
    the number of EA-classes inside the CCZ-class of $f:\Fn\rightarrow\F_{2^m}$
    is upper bounded by the number of WZ spaces of $f$. 
    Beierle, Carlet, Leander, and Perrin in \cite{BCLP}
    have studied two new quadratic APN permutations in dimension 9 
    using numerical properties of their WZ spaces (thickness and degree spectrum). 
    Furthermore, in \cite{BCLP} the authors show an interesting 
    similarity between their new APN permutations and Gold APN 
    permutations in odd dimension divisible by 3. 
    In the next section we provide constructions of nontrivial 
    WZ spaces for Gold APN permutations in odd dimension which are, 
    as far as we know, the first such theoretical 
    (computer-free)
    constructions.
    
    From now on let us define $0^{-\frac{1}{2^i + 1}}=0$ as this will simplify
    some of the forthcoming constructions and arguments. Let us note that
    this works correctly for the Walsh zero test for Gold APN permutations,
    where for a pair $(a,0)$ we get $\Tr(a\cdot 0^{-\frac{1}{2^i + 1}})
    =\Tr(a\cdot 0)=0$, as desired.
    
    \begin{definition}
        Let $n$ be odd and $\gcd(i,n)=1$. Let $S$ be an additive subspace of $\Fn$. We say that $S$ is {\em $i$-compatible} if the set $S^{-\frac{1}{2^i + 1}}=\{ s^{-\frac{1}{2^i + 1}} \::\: s\in S \}$ is also an additive subspace of $\Fn$.
    \end{definition}




        

    For $U\subseteq \Fn$ and $a\in\Fn$
    denote $aU=\{au \: : \: u \in U\}$.
    
    \begin{example}
        \label{ex-triv-compa}
        \ \\
        (i)
        If $n$ is odd and $\gcd(i,n)=1$ then the following subspaces
        of $\Fn$ are $i$-compatible: $\{0\}$, $\F_2$
        and $\Fn$.\\
        (ii)
        If $S$ is an $i$-compatible subspace of $\Fn$, 
        then $\mu S$ is also $i$-compatible
        for each $\mu\in \Fn$.
    \end{example}

    \begin{proposition}
    	\label{prop-subfield-compatible}
        Suppose $m$ divides $n$ and $\gcd(i,n) = 1$. Then $\F_{2^m}$ is $i$-compatible subspace of $\F_{2^n}$.
    \end{proposition}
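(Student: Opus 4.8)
The plan is to prove the slightly stronger statement that $\F_{2^m}^{-\frac{1}{2^i+1}}=\F_{2^m}$; since $\F_{2^m}$ is an additive subspace of $\Fn$, this immediately yields $i$-compatibility. Throughout I use the standing assumption that $n$ is odd, so that (together with $\gcd(i,n)=1$) one has $\gcd(2^i+1,2^n-1)=1$; this is the same elementary number-theoretic fact that already underlies the statement, used in the proof of Proposition~\ref{WZtest}, that $x\mapsto x^{2^i+1}$ is a permutation of $\Fn$. Write $\phi\colon\Fn\to\Fn$, $\phi(x)=x^{2^i+1}$, so $\phi$ is a bijection fixing $0$, and by definition $x^{\frac{1}{2^i+1}}=\phi^{-1}(x)$ for every $x\in\Fn$.

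First I would show that $\phi$ maps $\F_{2^m}$ into itself. Here is exactly where the hypothesis $m\mid n$ enters: the map $x\mapsto x^{2^i}$ is a power of the Frobenius automorphism of $\Fn$, and the subfield $\F_{2^m}$ is stable under Frobenius, so $x^{2^i}\in\F_{2^m}$ whenever $x\in\F_{2^m}$; hence $x^{2^i+1}=x^{2^i}\cdot x\in\F_{2^m}$. Since $\phi$ is injective on $\Fn$ and $\F_{2^m}$ is finite, $\phi$ restricts to a bijection of $\F_{2^m}$, and therefore $\phi^{-1}$ also maps $\F_{2^m}$ into $\F_{2^m}$.

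Finally I would assemble the pieces. For $s\in\F_{2^m}^{*}$ we have $s^{-1}\in\F_{2^m}^{*}$, hence $s^{-\frac{1}{2^i+1}}=\phi^{-1}(s^{-1})\in\F_{2^m}$; combined with the convention $0^{-\frac{1}{2^i+1}}=0$ this gives $\F_{2^m}^{-\frac{1}{2^i+1}}\subseteq\F_{2^m}$. On the other hand $s\mapsto s^{-\frac{1}{2^i+1}}$ is injective on $\Fn$ (on $\Fn^{*}$ it is the composition of the bijection $s\mapsto s^{-1}$ with the bijection $\phi^{-1}$, and it sends $0$ to $0$), so $\F_{2^m}^{-\frac{1}{2^i+1}}$ and $\F_{2^m}$ have the same cardinality and must coincide. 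I do not expect a genuine obstacle: the proof is short, and the only points needing care are the bookkeeping around the convention $0^{-\frac{1}{2^i+1}}=0$ and the observation that $\F_{2^m}$ is closed under $x\mapsto x^{2^i}$, which is the single place the subfield hypothesis is used.
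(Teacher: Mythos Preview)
Your proof is correct and follows essentially the same approach as the paper: both arguments show $\F_{2^m}^{-\frac{1}{2^i+1}}=\F_{2^m}$ by exploiting that $\F_{2^m}$ is closed under the relevant power maps, together with injectivity to upgrade the inclusion to an equality. The paper's version is organised a touch more directly---it simply checks $(\xi^{-\frac{1}{2^i+1}})^{2^m}=(\xi^{2^m})^{-\frac{1}{2^i+1}}=\xi^{-\frac{1}{2^i+1}}$ to conclude the image lies in $\F_{2^m}$---whereas you factor the map as inversion followed by $\phi^{-1}$ and verify each factor preserves $\F_{2^m}$; but the substance is the same.
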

    \begin{proof}
        Let $\xi\in\F_{2^m}$. Then 
        $(\xi^{-\frac{1}{2^i+1}})^{2^m} = (\xi^{2^m})^{-\frac{1}{2^i+1}} 
        = \xi^{-\frac{1}{2^i+1}}$.
        So $\xi^{-\frac{1}{2^i+1}} \in\F_{2^m}$ and it follows that 
        $\{ \xi^{-\frac{1}{2^i+1}} \; : \; \xi\in\F_{2^m} \}
        = \F_{2^m}$.

    \end{proof}
    
    Moreover, when $n$ is a multiple of~$3$,
    then $\Fn$ contains the subfield $\F_{2^3}$
    and the following lemma applies.

    \begin{lemma} \label{f8linear}
        Suppose $n$ is a multiple of $3$ and $\gcd(i,n)=1$. Let $\xi\in\F_{2^3}\subset \F_{2^n}$. Then $\xi^{-\frac{1}{2^i+1}} = \xi^{2^i}$. 
    \end{lemma}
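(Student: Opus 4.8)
The plan is to verify the claimed identity by exhibiting $\xi^{2^i}$ as the (unique) preimage of $\xi^{-1}$ under the map $x\mapsto x^{2^i+1}$. First I would note that this map is a bijection of $\F_{2^n}$: since $n$ is odd, $\gcd(2i,n)=\gcd(i,n)=1$, so $\gcd(2^i+1,2^n-1)$ divides $\gcd(2^{2i}-1,2^n-1)=2^{\gcd(2i,n)}-1=1$; hence the fractional exponent $-\tfrac{1}{2^i+1}$ is well defined and it is legitimate to prove $\xi^{-\frac{1}{2^i+1}}=\xi^{2^i}$ by raising both sides to the power $2^i+1$. This reduces the statement to the single equation $(\xi^{2^i})^{2^i+1}=\xi^{-1}$, i.e.\ $\xi^{\,2^{2i}+2^i+1}=1$, for every $\xi\in\F_{2^3}^*$; the case $\xi=0$ is immediate from the convention $0^{-\frac{1}{2^i+1}}=0$.

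Since every nonzero element of $\F_{2^3}$ satisfies $\xi^{7}=1$, it then suffices to show that $7\mid 2^{2i}+2^i+1$. For this I would use the factorization $2^{3i}-1=(2^i-1)(2^{2i}+2^i+1)$ together with the fact that the multiplicative order of $2$ modulo $7$ is $3$. Because $3\mid 3i$ we have $7\mid 2^{3i}-1$, while $\gcd(i,3)=1$ (which holds since $3\mid n$ and $\gcd(i,n)=1$) gives $7\nmid 2^i-1$; therefore $7$ must divide the cofactor $2^{2i}+2^i+1$, as required. Alternatively, one can bypass the order argument and simply split into the two cases $i\equiv 1$ and $i\equiv 2\pmod 3$, in which $2^i\equiv 2$ and $2^i\equiv 4\pmod 7$, so that $2^{2i}+2^i+1\equiv 4+2+1\equiv 0$ and $\equiv 16+4+1\equiv 0\pmod 7$ respectively.

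I do not expect a genuine obstacle in this argument; the only points requiring a little care are recording why $x\mapsto x^{2^i+1}$ is bijective on $\F_{2^n}$ (so that verifying the identity after exponentiation by $2^i+1$ is valid and loses no information), and the degenerate case $\xi=0$, which is handled by the stated convention. The lemma can be viewed as a sharpening of Proposition~\ref{prop-subfield-compatible} in the case $m=3$: not only is $\F_{2^3}$ an $i$-compatible subspace, but the map $s\mapsto s^{-\frac{1}{2^i+1}}$ acts on it as the explicit linear map $\xi\mapsto\xi^{2^i}$.
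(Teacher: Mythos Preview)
Your proof is correct and follows essentially the same route as the paper: dispose of $\xi=0$ via the convention, raise both sides to the $(2^i+1)$-th power, and verify the resulting identity in $\F_{2^3}^*$ by reducing the exponent modulo~$7$. The paper carries out this last step precisely by the case split $i\equiv 1,2\pmod 3$ that you list as your alternative, while your primary factorization argument via $2^{3i}-1=(2^i-1)(2^{2i}+2^i+1)$ is a minor repackaging of the same computation; you are also a touch more explicit than the paper in justifying why exponentiation by $2^i+1$ is reversible.
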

    \begin{proof}
        For $\xi = 0$ the result follows from our definition $0^{-\frac{1}{2^i+1}} = 0$. 
        Otherwise, after raising both sides of $\xi^{-\frac{1}{2^i+1}} = \xi^{2^i}$ to power
        $2^i+1$ we face proving the equivalent equation $$\xi^{-1} = \xi^{2^{2i} + 2^i}.$$
For $i\equiv 1\pmod{3}$
this follows from 
\[
\xi^{2^{2i} + 2^i} = \xi^{2^{2} + 2^1}
            = \xi^6 = \xi^{-1}
            \]
and for $i\equiv 2\pmod{3}$
it follows from 
\[
\xi^{2^{2i} + 2^i} = \xi^{2^1 + 2^2}
= \xi^6 = \xi^{-1}.
\]
    \end{proof}
    
    \begin{corollary}
        \label{f8}
        Assume that $n$ is odd and divisible by 3. Let $S$ be the subfield of $\F_{2^n}$ isomorphic to $\F_{2^3}$.
        Then each additive subspace of $S$ is $i$-compatible whenever $\gcd(i,n)=1$.
    \end{corollary}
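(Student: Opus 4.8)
The plan is to read off the result directly from Lemma~\ref{f8linear}: on $S$ the operation $s\mapsto s^{-\frac1{2^i+1}}$ agrees with a power of the Frobenius map, and powers of Frobenius are $\F_2$-linear, so they carry additive subspaces to additive subspaces.

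First I would observe that, because $3\mid n$, the field $\Fn$ has a unique subfield $S$ isomorphic to $\F_{2^3}$, and that the map $s\mapsto s^{-\frac1{2^i+1}}$ sends $S$ into $S$; this is Proposition~\ref{prop-subfield-compatible} with $m=3$, and it is also visible directly from Lemma~\ref{f8linear} since $\xi^{2^i}\in S$ whenever $\xi\in S$. Thus we have a well-defined map $\varphi\colon S\to S$ given by $\varphi(s)=s^{-\frac1{2^i+1}}$. By Lemma~\ref{f8linear}, $\varphi(s)=s^{2^i}$ for all $s\in S$. Since $x\mapsto x^{2^i}$ is an iterate of the Frobenius automorphism of $\Fn$, it is additive, and hence so is its restriction $\varphi$ to the $\F_2$-subspace $S$.

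Now let $U$ be an arbitrary additive subspace of $S$. Then $U^{-\frac1{2^i+1}}=\varphi(U)$ is the image of an $\F_2$-subspace under an additive map, hence an additive subspace of $S$, and therefore an additive subspace of $\Fn$. By the definition of $i$-compatibility, $U$ is $i$-compatible, which proves the corollary. I do not expect any real difficulty in carrying this out; the only point that deserves explicit mention is that $\varphi$ is $S$-valued, so that Lemma~\ref{f8linear} may be applied to every element of $U$, and this is supplied by Proposition~\ref{prop-subfield-compatible} (equivalently by the inclusion $\xi^{2^i}\in S$ for $\xi\in S$).
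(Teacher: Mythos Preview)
Your proof is correct and follows essentially the same route as the paper's own one-line argument: Lemma~\ref{f8linear} shows that on $S\cong\F_{2^3}$ the map $s\mapsto s^{-\frac1{2^i+1}}$ coincides with the additive map $s\mapsto s^{2^i}$, so it sends additive subspaces to additive subspaces. Your write-up is more detailed (invoking Proposition~\ref{prop-subfield-compatible} and spelling out the Frobenius interpretation), but the content is the same.
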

    \begin{proof}
        This follows from Lemma \ref{f8linear} since $\xi\mapsto \xi^{-\frac{1}{2^i + 1}}$
        is a linear mapping on $\F_{2^3}$.
    \end{proof}

    As any two 2-dimensional subspaces (hyperplanes) of $\F_{2^3}$
    can be obtained from each other just by scaling by
    an element of $\F_{2^3}^*$, it follows that if $3$ divides~$n$,
    then Lemma~\ref{f8} provides 
    $2^n-1$ two-dimensional
    $i$-compatible subspaces of $\Fn$
    and
    $(2^n-1)/7$ three-dimensional
    $i$-compatible subspaces of $\Fn$.
    We have checked by exhaustive search that there are no two-dimensional 
    $i$-compatible subspaces of $\Fn$ other than those described above
    for odd $n$ and relatively prime $i$
    in the range $n\le 17$.

    For $i=1$ it is easy to see that the proof of Lemma~\ref{f8} does not
    extend to any higher dimension.
    Indeed suppose that $s\mapsto s^{-1/3}$ is linear on $\Fn$
    where $n>3$ is odd, then $-1/3 \equiv 2^k\pmod{2^n-1}$
    for some integer $0\le k<n$, and 
    $3\cdot 2^k+1$ is a multiple of $2^n-1$.
    If $n>3$ and $k< n-1$ then $3\cdot 2^k+1$ is too small
    for this to happen. The only remaining
    possibility is $3\cdot 2^{n-1}+1=2^n-1$, which can not occur either.
    Similarly, to extend Lemma~\ref{f8} to higher dimension for $i>1$,
    we must show that $2^k(2^i + 1) + 1$ is a multiple of $2^n-1$ for 
    some integer $0\le k<n$. But a parity argument shows that this is also 
    not possible.

In Proposition~\ref{prop-comp} below we will see
that $i$-compatible spaces can be used to construct
WZ spaces. Thus it is interesting to obtain
as many $i$-compatible spaces as we can. This motivates
us to present the following open problem:

    \begin{problem}
        For odd $n$, do there exist $i$-compatible subspaces
        of $\Fn$ other than those described by Example~\ref{ex-triv-compa}(i,ii),
        Proposition~\ref{prop-subfield-compatible}
        and Corollary~\ref{f8}?
    \end{problem}

\section{Constructions of some WZ spaces for Gold APN permutations}
\label{section-WZspace}
    
    Theoretical constructions in 
    this section have been
    partially motivated by computational examples
    in low odd dimensions obtained using the sboxU
    software package \cite{sboxU,BPT}
    written by
    L\'eo Perrin and Mathias Joly.
    In particular, 
    we verified using sboxU
    that the theoretical constructions
    given in this section, along with
    the two trivial spaces $Z_{a0}$ and $Z_{0b}$, cover
    all WZ spaces of Gold APN functions
    in all odd dimensions less than or equal to~$9$.

    \begin{proposition}
        \label{prop-comp}
        Assume that $n$ is odd and $S$ is an $i$-compatible additive
        subspace of $\Fn$.
        Let $f:\Fn\rightarrow \Fn$, $f(x)=x^{2^i + 1}$, with $\gcd(i,n)=1$.
        Let $$X=\{ x \in \Fn \: : \:
        (\forall a\in S^{-\frac{1}{2^i + 1}}) \; \Tr(ax)=0   \}.$$
        Then $X\times S$ is a WZ space for~$f$.
    \end{proposition}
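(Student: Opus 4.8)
**The plan is to verify directly that $X \times S$ is an $n$-dimensional $\Ft$-subspace all of whose nonzero elements pass the Walsh zero test of Proposition~\ref{WZtest}.**

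First I would check that $X \times S$ is an $\Ft$-subspace of $\Fn \times \Fn$. Since $S$ is $i$-compatible, $S^{-1/(2^i+1)}$ is an additive subspace of $\Fn$; then $X$ is by definition the $\Ft$-orthogonal complement (with respect to the trace form $\langle a,x\rangle = \Tr(ax)$) of the subspace $S^{-1/(2^i+1)}$, hence $X$ is an additive subspace. The product of two additive subspaces is an additive subspace, so $X \times S$ is a subspace.

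Next I would compute the dimension. The trace form is nondegenerate, so $\dim_{\Ft} X = n - \dim_{\Ft} S^{-1/(2^i+1)}$. The key point here is that $s \mapsto s^{-1/(2^i+1)}$ is a bijection of $\Fn$ (with our convention $0 \mapsto 0$), since $\gcd(2^i+1, 2^n-1) = 1$ when $n$ is odd and $\gcd(i,n)=1$; therefore $|S^{-1/(2^i+1)}| = |S|$, so $\dim_{\Ft} S^{-1/(2^i+1)} = \dim_{\Ft} S$. Consequently $\dim_{\Ft}(X \times S) = (n - \dim_{\Ft} S) + \dim_{\Ft} S = n$, as required.

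Finally I would verify the Walsh zero condition. Take an arbitrary nonzero $(a,b) \in X \times S$, so $a \in X$ and $b \in S$, not both zero. If $b = 0$ then $a \neq 0$, and the pair $(a,0)$ is a Walsh zero by the ``$a \ne b = 0$'' branch of Proposition~\ref{WZtest}. If $b \neq 0$, then $b^{-1/(2^i+1)} \in S^{-1/(2^i+1)}$, and since $a \in X$ we have $\Tr(a \cdot b^{-1/(2^i+1)}) = 0$, which is exactly the condition in Proposition~\ref{WZtest} for $(a,b)$ to be a Walsh zero. Hence every nonzero element of $X \times S$ is a Walsh zero of $f$, completing the proof.

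**I do not anticipate a serious obstacle here**; the one point that needs a little care is that $S^{-1/(2^i+1)}$ really is a subspace of the same dimension as $S$ — but that is guaranteed by $i$-compatibility together with the bijectivity of $s \mapsto s^{-1/(2^i+1)}$, and the rest is a routine application of the Walsh zero test and the nondegeneracy of the trace form.
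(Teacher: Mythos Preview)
Your proposal is correct and follows essentially the same approach as the paper: verify linearity of $X\times S$, use bijectivity of $s\mapsto s^{-1/(2^i+1)}$ together with $i$-compatibility to get $\dim X = n-\dim S$, and apply the Walsh zero test (Proposition~\ref{WZtest}) to every nonzero element. The only difference is that you spell out the case split $b=0$ versus $b\neq 0$ and the nondegeneracy of the trace form explicitly, whereas the paper handles both at once via the convention $0^{-1/(2^i+1)}=0$.
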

    \begin{proof}
        Let $(0,0)\neq (x,s)\in X\times S$. 
        Then $\Tr(xs^{-\frac{1}{2^i + 1}})=0$ by construction, hence    
        $(x,s)$ is a Walsh zero of~$f$. Since both $X$ and $S$
        are linear spaces, $X\times S$ is also a linear space.
        Finally, since $s\mapsto s^{-\frac{1}{2^i + 1}}$ is a bijection on $\Fn$,
        we get $\dim S^{-\frac{1}{2^i + 1}}=\dim S $, and
        \[
        \dim( X\times S) = \dim X + \dim S = (n-\dim S)+\dim S =n. 
        \]
    \end{proof}
    
    \begin{proposition} 
        \label{prop-xi}
        Assume that $n=3k$ where $k$ is odd.
        Let $f:\Fn\rightarrow \Fn$, $f(x)=x^{2^i+1}$, with $\gcd(i,n)=1$.
        Suppose $\xi$ is a fixed element of $\F_{2^3}\subset\F_{2^n}$ and $\mu$ is a fixed element of $\F_{2^n}^*$. Then 
        \[
        Z = \left\{ \left(x \, , \, \mu^{-(2^i+1)}( \xi\Tr(\mu x) + \Tr(\xi^{2^i}\mu x)) \right) \ : \ x \in \F_{2^n}
        \right\}
        \]
        is a WZ space of $f$.
    \end{proposition}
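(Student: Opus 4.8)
The plan is to verify directly, from Proposition~\ref{WZtest} (the Walsh zero test), the two requirements in the definition of a WZ space: that $Z$ is an $n$-dimensional $\Ft$-subspace of $\Fn\times\Fn$, and that each of its nonzero elements is a Walsh zero of $f$. First I would observe that $Z$ is the image of the map
\[
\phi:\Fn\to\Fn\times\Fn,\qquad \phi(x)=\left(x\, ,\, \mu^{-(2^i+1)}\bigl(\xi\Tr(\mu x)+\Tr(\xi^{2^i}\mu x)\bigr)\right).
\]
Since the absolute trace and multiplication by fixed elements of $\Fn$ are $\Ft$-linear, $\phi$ is $\Ft$-linear; and since the first coordinate of $\phi(x)$ equals $x$, the map $\phi$ is injective. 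Hence $Z=\phi(\Fn)$ is an $\Ft$-subspace with $\dim_{\Ft}Z=n$, and it remains only to check the Walsh zero property.

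For that, fix $x\in\Fn$ and abbreviate $\eps_1=\Tr(\mu x)\in\Ft$ and $\eps_2=\Tr(\xi^{2^i}\mu x)\in\Ft$, so that the second coordinate of $\phi(x)$ is $y=\mu^{-(2^i+1)}(\xi\eps_1+\eps_2)$ with $\xi\eps_1+\eps_2\in\F_{2^3}$. If $y=0$ then $x\neq 0$ (as $(x,y)\neq(0,0)$), and $(x,y)$ is a Walsh zero by the ``$a\neq b=0$'' alternative of the test. In the remaining case $y\neq 0$ I must show $\Tr\!\left(x\,y^{-\frac{1}{2^i+1}}\right)=0$. The key step is the computation of $y^{-\frac{1}{2^i+1}}$. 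Since $z\mapsto z^{-\frac{1}{2^i+1}}$ is multiplicative on $\Fn$ and $\bigl(\mu^{-(2^i+1)}\bigr)^{-\frac{1}{2^i+1}}=\mu$, we get $y^{-\frac{1}{2^i+1}}=\mu\,(\xi\eps_1+\eps_2)^{-\frac{1}{2^i+1}}$, and then Lemma~\ref{f8linear}, applied to the element $\xi\eps_1+\eps_2$ of the subfield $\F_{2^3}$, yields
\[
y^{-\frac{1}{2^i+1}}=\mu\,(\xi\eps_1+\eps_2)^{2^i}=\mu\,(\xi^{2^i}\eps_1+\eps_2)=\eps_1\,\mu\xi^{2^i}+\eps_2\,\mu,
\]
using $\eps_j^{2^i}=\eps_j$.

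Substituting and using $\Ft$-linearity of the trace gives
\[
\Tr\!\left(x\,y^{-\frac{1}{2^i+1}}\right)=\eps_1\,\Tr(\mu\xi^{2^i}x)+\eps_2\,\Tr(\mu x)=\eps_1\eps_2+\eps_2\eps_1=0
\]
in $\Ft$, which is exactly the condition of Proposition~\ref{WZtest}. Hence every nonzero element of $Z$ passes the Walsh zero test, and $Z$ is a WZ space of $f$. I expect the only genuine obstacle to be bookkeeping rather than ideas: one must factor the scalar $\mu^{-(2^i+1)}$ out cleanly so that Lemma~\ref{f8linear} is applied to an element that truly lies in $\F_{2^3}$ (the value $y$ itself generally does not), and one should mention the degenerate situations explicitly — when $\xi\in\Ft$ the second coordinate of $\phi$ vanishes identically and $Z$ degenerates to the trivial WZ space $Z_{a0}$, and the case $y=0$ for general $\xi$ is exactly where the second alternative of the Walsh zero test is needed. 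It may also be worth remarking that the second coordinates of the elements of $Z$ all lie in the scaling by $\mu^{-(2^i+1)}$ of the $\Ft$-span of $1$ and $\xi$, which is an $i$-compatible subspace of $\Fn$ by Corollary~\ref{f8} and Example~\ref{ex-triv-compa}(ii); this ties the construction to the theme of Proposition~\ref{prop-comp}, although the argument above does not use it.
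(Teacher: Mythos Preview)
Your proof is correct and follows essentially the same route as the paper's: both establish the linear/dimension claim from the fact that the first coordinate is $x$, factor out $\mu$ so that Lemma~\ref{f8linear} applies to the remaining $\F_{2^3}$-valued factor $\xi\Tr(\mu x)+\Tr(\xi^{2^i}\mu x)$, expand the $2^i$-th power, and observe that the resulting trace splits into two equal terms $\Tr(\mu x)\Tr(\xi^{2^i}\mu x)$. Your shorthand $\eps_1,\eps_2$ and the explicit handling of the $y=0$ case are cosmetic additions; the argument is otherwise identical.
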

    \begin{proof}
        The additive closure of $Z$ follows from the fact
        that $$g(x) = \mu^{-(2^i+1)}( \xi\Tr(\mu x) + \Tr(\xi^{2^i}\mu x))$$ is an
        $\F_2$-linear function.
        
        We now prove that each element $(x,g(x)) \in Z \backslash \{ (0,0) \}$
        is a Walsh zero of $f$.
        We only need to address the cases when $ g(x)\neq 0 $.
        Note that $ \mu^{2^i+1} g(x)\in \F_{2^3} $. 
        Using Lemma~\ref{f8linear} we get
        \begin{eqnarray*}
            \Tr(xg(x)^{-\frac{1}{2^i + 1}} )
            & = &
            \Tr(x\mu(\mu^{2^i+1}g(x) )^{-\frac{1}{2^i + 1}}) \\
            & = &
            \Tr(x\mu(\xi\Tr(\mu x) + \Tr(\xi^{2^i}\mu x))^{2^i}) \\
            & = &
            \Tr(x\mu\xi^{2^i}\Tr(\mu x) + x\mu\Tr(\xi^{2^i}\mu x)) \\
            & = &
            \Tr(x\mu\xi^{2^i}\Tr(\mu x)) + \Tr(x\mu\Tr(\xi^{2^i}\mu x)) \\
            & = & 0
        \end{eqnarray*}
        because $\Tr(x \mu\xi^{2^i}\Tr(\mu x)) = \Tr(x \mu\xi^{2^i})\Tr(\mu x) = \Tr(x\mu \Tr(\xi^{2^i}\mu x))$. By considering the first components
        of the elements of $Z$ it is clear that $\dim Z =n$.
    \end{proof}
    While Proposition \ref{prop-xi} holds for each $\xi\in \F_{2^3}$, one obtains interesting results only when $\xi$ is a primitive element of $\F_{2^3}$. If $\xi = 0,1$ then $Z$ is the trivial WZ space $Z_{a0}$.
    
    \begin{proposition} 
        \label{prop-traceFm}
        Suppose $n$ is odd and $m$ divides $n$. 
        Let $f:\F_{2^n} \rightarrow \F_{2^n}$, $f(x) = x^{2^i+1}$, with $\gcd(i,n)=1$. Suppose $\mu$ is a fixed element of $\F_{2^n}^*$. Then 
        \[
        Z = \{ (\mu a, \mu^{2^i+1} b) : a\in \F_{2^n}, b\in \F_{2^m} \textup{ and } \Tr_m^n(a) = b + b^{\frac{1}{2^i}} \}
        \]
        is a WZ space of $f$.
    \end{proposition}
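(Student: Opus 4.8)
The plan is to check the three requirements in the definition of a WZ space: that $Z$ is an $\Ft$-linear subspace of $\Fn\times\Fn$, that $\dim_{\Ft}Z=n$, and that every nonzero element of $Z$ is a Walsh zero of $f$. Writing $\Phi(a,b)=(\mu a,\mu^{2^i+1}b)$, which is an $\Ft$-linear bijection from $\Fn\times\F_{2^m}$ onto its image, we have $Z=\Phi(W)$ with $W=\{(a,b)\in\Fn\times\F_{2^m}:\Tr_m^n(a)=b+b^{1/2^i}\}$. Both $\Tr_m^n$ and the selfmap $b\mapsto b+b^{1/2^i}$ of $\F_{2^m}$ (note the $2^i$-th root of an element of $\F_{2^m}$ again lies in $\F_{2^m}$) are $\Ft$-linear, so $W$ is the kernel of an $\Ft$-linear map, hence a subspace; therefore $Z$ is one as well.

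For the dimension I would count $|W|$: for each of the $2^m$ values $b\in\F_{2^m}$, the right-hand side $b+b^{1/2^i}$ is a fixed element of $\F_{2^m}$, and $\Tr_m^n\colon\Fn\to\F_{2^m}$ is surjective with kernel of $\Ft$-dimension $n-m$, so exactly $2^{n-m}$ values of $a$ work. Hence $|W|=2^m\cdot 2^{n-m}=2^n$, and since $\Phi$ is injective, $|Z|=2^n$ and $\dim_{\Ft}Z=n$.

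The main content is the Walsh zero condition, which I would verify via the Walsh zero test (Proposition~\ref{WZtest}). Let $(\mu a,\mu^{2^i+1}b)$ be a nonzero element of $Z$. By multiplicativity of the power map, $\Tr\bigl((\mu a)(\mu^{2^i+1}b)^{-\frac{1}{2^i+1}}\bigr)=\Tr\bigl(ab^{-\frac{1}{2^i+1}}\bigr)$, so it suffices to show this vanishes. If $b=0$ it equals $\Tr(a\cdot 0^{-\frac{1}{2^i+1}})=\Tr(a\cdot 0)=0$ by the convention $0^{-\frac{1}{2^i+1}}=0$. If $b\neq 0$, set $d=b^{-\frac{1}{2^i+1}}$; by Proposition~\ref{prop-subfield-compatible} (whose proof shows $\{s^{-\frac{1}{2^i+1}}:s\in\F_{2^m}\}=\F_{2^m}$) we get $d\in\F_{2^m}^{*}$, so $b=d^{-(2^i+1)}$. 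Using transitivity of the trace together with $d\in\F_{2^m}$, and then the defining relation of $W$, we compute
\[
\Tr(ad)=\Tr_1^m\bigl(d\,\Tr_m^n(a)\bigr)=\Tr_1^m\bigl(d(b+b^{1/2^i})\bigr)=\Tr_1^m\bigl(d^{-2^i}+d^{-1/2^i}\bigr),
\]
where $db=d^{-2^i}$ and $db^{1/2^i}=d\,(d^{-(2^i+1)})^{1/2^i}=d^{-1/2^i}$. Since the absolute trace on $\F_{2^m}$ is invariant under the Frobenius $x\mapsto x^{2^i}$, both summands have trace $\Tr_1^m(d^{-1})$, so $\Tr(ad)=\Tr_1^m(d^{-1})+\Tr_1^m(d^{-1})=0$, and the Walsh zero test holds.

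I do not expect a real obstacle here; the argument is bookkeeping with trace transitivity and Frobenius invariance. The two places demanding care are ensuring the $(2^i+1)$-st root stays inside $\F_{2^m}$ --- which is exactly Proposition~\ref{prop-subfield-compatible}, and which rests on $m$ being odd since $m\mid n$ with $n$ odd --- and isolating the degenerate case $b=0$, where the convention $0^{-\frac{1}{2^i+1}}=0$ is invoked.
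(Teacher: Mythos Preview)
Your proof is correct and follows essentially the same route as the paper: reduce to $\Tr(ab^{-\frac{1}{2^i+1}})$, note that $b^{-\frac{1}{2^i+1}}\in\F_{2^m}$, apply trace transitivity together with the defining relation $\Tr_m^n(a)=b+b^{1/2^i}$, and cancel the two resulting terms by Frobenius invariance of $\Tr_1^m$; the dimension count is also identical. The only cosmetic differences are your use of the substitution $d=b^{-\frac{1}{2^i+1}}$ and your packaging of additive closure via the linear map $\Phi$, both of which streamline the presentation without changing the argument.
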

    \begin{proof}
        First we show that each element $(\mu a, \mu^{2^i+1} b)$ of $Z$ except for $(0,0)$ is a Walsh zero of $f$. Note that since $m$ is odd, $b^{-\frac{1}{2^i+1}} \in \F_{2^m}$. We have
        \begin{eqnarray*}
            \Tr_1^n(\mu a(\mu^{2^i+1} b)^{-\frac{1}{2^i+1}})
            & = &
            \Tr_1^n(ab^{-\frac{1}{2^i+1}}) \\
            & = &
            \Tr_1^m(\Tr_m^n(ab^{-\frac{1}{2^i+1}})) \\
            & = &
            \Tr_1^m(b^{-\frac{1}{2^i+1}}\Tr_m^n(a)) \\
            & = &
            \Tr_1^m(b^{-\frac{1}{2^i+1}}(b + b^{\frac{1}{2^i}})) \\
            & = &
            \Tr_1^m(b^{\frac{2^i}{2^i+1}} + b^{\frac{1}{2^i(2^i+1)}}) \\
            & = &
            \Tr_1^m(b^{\frac{2^i}{2^i+1}}) + \Tr_1^m\left(b^{\frac{2^{2i}}{2^i(2^i+1)}}\right) \\
            & = &
            0
        \end{eqnarray*}
        hence each nonzero element of $Z$ is a Walsh zero
        of $f$. 
        
        Let $(\mu a_1 , \mu^{2^i+1} b_1),(\mu a_2, \mu^{2^i+1} b_2)\in Z$. Then 
        \[
        (\mu a_1 , \mu^{2^i+1} b_1) + (\mu a_2, \mu^{2^i+1} b_2) 
        = (\mu (a_1+a_2), \mu^{2^i+1}(b_1 + b_2))\in Z
        \]
        since $b_1 + b_2 \in \F_{2^m}$ and \begin{eqnarray*}
            \Tr_m^n(a_1 + a_2) 
            & = &
            \Tr_m^n(a_1) + \Tr_m^n(a_2) \\
            & = &
            b_1 + b_1^{\frac{1}{2^i}} + b_2 + b_2^{\frac{1}{2^i}} \\
            & = &
            (b_1 + b_2) + (b_1 + b_2)^{\frac{1}{2^i}}.
        \end{eqnarray*}
        Therefore $Z$ is additively closed.
        
        For each of the $2^m$ choices for the second component $\mu^{2^i+1} b$ of an element of $Z$, there are $2^{n-m}$ choices for the first component $\mu a$ such that $\Tr_m^n(a) = b + b^{\frac{1}{2^i}}$. It follows that the dimension of $Z$ is $n$.  
        
    \end{proof}

    \section{Trivially intersecting pairs of WZ spaces for Gold APN permutations}
    \label{sec-Ti-pairs}
    In this section we describe 
    several theoretical constructions
    of trivially intersecting
    pairs of WZ spaces for Gold APN permutations in odd dimensions.
    We checked using the sboxU software package
    that these constructions cover
    all such pairs in dimensions less than or equal to~9.
    Obviously, the two trivial WZ spaces $Z_{a0},Z_{0b}$
    intersect trivially.

    \begin{proposition} \label{prop-TI-0,n_n,2}
        Let $n=3k$ where $k$ is odd and $f:\F_{2^n} \rightarrow \F_{2^n}$, $f(x) = x^{2^i+1}$, with $\gcd(i,n)=1$. Let $$Z=\{ (x,\mu^{-(2^i+1)}(\xi\Tr(\mu x)+\Tr(\xi^{2^i}\mu x))) : x \in \F_{2^n}\}$$ where $\xi$ is a fixed element of $\F_{2^3}\subset\F_{2^n}$ and $\mu$ is a fixed element of $\F_{2^n}^*$. Then $Z$ is a WZ space of $f$ and the pair $\{ Z_{0b}, Z \}$ intersects trivially. 
    \end{proposition}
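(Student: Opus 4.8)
The first claim requires no new argument: the space $Z$ here is exactly the space produced in Proposition~\ref{prop-xi} (with the same $\xi$ and $\mu$), so it is a WZ space of $f$; and since $f$ is a Gold APN permutation, $Z_{0b}$ is a WZ space of $f$ as well. Thus the only thing left to verify is that $Z\cap Z_{0b}=\{(0,0)\}$.

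For that, the plan is to exploit the shape in which $Z$ is written. Set $g(x)=\mu^{-(2^i+1)}(\xi\Tr(\mu x)+\Tr(\xi^{2^i}\mu x))$, so that $Z=\{(x,g(x)):x\in\F_{2^n}\}$ is the graph of $g$; in particular the first coordinate of an element of $Z$ determines that element. Now take any $(u,v)\in Z\cap Z_{0b}$. Membership in $Z_{0b}$ forces $u=0$, and then membership in $Z$ forces $v=g(0)$. As observed in the proof of Proposition~\ref{prop-xi}, $g$ is $\F_2$-linear, hence $g(0)=0$, so $v=0$ and $(u,v)=(0,0)$. Therefore $Z\cap Z_{0b}=\{(0,0)\}$, i.e.\ the pair $\{Z_{0b},Z\}$ intersects trivially.

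There is no real obstacle to overcome here: all of the substantive work has already been done in Proposition~\ref{prop-xi}, and the trivial intersection is a formal consequence of the fact that $Z$ is a functional graph over the first copy of $\F_{2^n}$ while $Z_{0b}$ is supported entirely on the second copy. In particular the conclusion holds for every choice of $\xi\in\F_{2^3}$ and $\mu\in\F_{2^n}^*$, and the same reasoning shows more generally that any WZ space of the form $\{(x,h(x)):x\in\F_{2^n}\}$ with $h$ linear intersects $Z_{0b}$ trivially.
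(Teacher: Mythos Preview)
Your argument is correct and matches the paper's proof exactly: invoke Proposition~\ref{prop-xi} for the WZ-space claim, then observe that any $(x,c)\in Z_{0b}\cap Z$ has $x=0$, whence $c=g(0)=0$. The only difference is that you spell out $g(0)=0$ via linearity and add the general remark about graphs of linear maps, while the paper leaves this implicit.
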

    \begin{proof}
        It follows from Proposition \ref{prop-xi} that $Z$ is a WZ space of $f$. If $(x,c)\in Z_{0b}\cap Z$,
        then $x=0$ which implies $c=0$.
        Therefore the pair $\{ Z_{0b}, Z \}$ intersects trivially.
    \end{proof}

    Note that in the above proposition, if $\xi\in\{ 0,1 \}$ then $Z$ 
    becomes the trivial WZ space $Z_{a0}$.
    
    \begin{proposition} \label{TI-hard-1}
        Let $n=3k$ where $k$ is odd and $f:\F_{2^n} \rightarrow \F_{2^n}$, $f(x) = x^{2^i+1}$, with $\gcd(i,n)=1$.
        Let $$Y=\{ (x,\mu^{-(2^i+1)}(\xi\Tr(\mu x)+\Tr(\xi^{2^i}\mu x))) : x \in \F_{2^n}\}$$  where $\xi$ is a fixed element of $\F_{2^3}\subset\F_{2^n}$ 
        and $\mu$ is a fixed element of $\F_{2^n}^*$.
        Let $$Z = \{(\nu (b + b^{\frac{1}{2^i}}), \nu^{2^i+1} b) : b\in \F_{2^n} \}$$ with $\nu$ a fixed element of $\F_{2^n}^*$.
        Suppose also that $\Tr((\xi + \xi^{2^i})(\mu\nu)^{-2^i}) = 0$.  
        Then $Y$ and $Z$ are WZ spaces of $f$ and the pair $\{ Y,Z \}$ intersects trivially.
    \end{proposition}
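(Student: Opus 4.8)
First note that $Y$ and $Z$ being WZ spaces of $f$ needs no new work: $Y$ is exactly the space of Proposition~\ref{prop-xi}, and $Z$ is the space of Proposition~\ref{prop-traceFm} in the case $m=n$ (there $\Tr^n_n$ is the identity, so the condition $\Tr^n_m(a)=b+b^{1/2^i}$ reads $a=b+b^{1/2^i}$, after renaming $\mu$ to $\nu$). For the trivial intersection, take $(x,c)\in Y\cap Z$. Membership in $Z$ gives $x=\nu(b+b^{1/2^i})$ and $c=\nu^{2^i+1}b$ for some $b\in\F_{2^n}$, while membership in $Y$ gives $c=\mu^{-(2^i+1)}(\xi\Tr(\mu x)+\Tr(\xi^{2^i}\mu x))$. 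I would substitute the first description into the second and multiply through by $\mu^{2^i+1}$; writing $\lambda=\mu\nu$ this yields the scalar equation
\[
\lambda^{2^i+1}b=\xi\,\Tr(\lambda(b+b^{1/2^i}))+\Tr(\xi^{2^i}\lambda(b+b^{1/2^i})).
\]
If $\xi\in\F_2$ then $Y=Z_{a0}$ and the intersection is obviously trivial, so assume $\xi\notin\F_2$, whence $\{1,\xi,\xi^2\}$ is an $\F_2$-basis of $\F_{2^3}$. The key point is that the right-hand side of this equation lies in the $\F_2$-span $V$ of $1$ and $\xi$, a two-dimensional subspace of $\F_{2^3}$; hence $\theta:=\lambda^{2^i+1}b\in V$, so $b=\lambda^{-(2^i+1)}\theta$ and it suffices to rule out the three nonzero values $\theta\in\{1,\xi,1+\xi\}$.

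To do this I would substitute $b=\lambda^{-(2^i+1)}\theta$ back into the equation. Using only Frobenius-invariance of the trace (for example $\lambda b=\lambda^{-2^i}\theta$ and $\lambda b^{1/2^i}=(\lambda^{-1}\theta)^{1/2^i}$), the binary coefficients $A:=\Tr(\lambda(b+b^{1/2^i}))$ and $B:=\Tr(\xi^{2^i}\lambda(b+b^{1/2^i}))$ simplify to $A=\Tr((E+F)\theta)$ and $B=\Tr(\rho E\theta)+\Tr(\sigma F\theta)$, where $E=\lambda^{-2^i}$, $F=\lambda^{-1}$, $\rho=\xi^{2^i}$ and $\sigma=\xi^{2^{2i}}$. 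Since $E=F^{2^i}$, Frobenius gives $\Tr(E)=\Tr(F)$, $\Tr(\rho E)=\Tr(\xi F)$ and $\Tr(E\xi)=\Tr(\sigma F)$, the last using $\xi^{2^{-i}}=\xi^{2^{2i}}$ for $\xi\in\F_{2^3}$ (valid since $3\mid n$ and $\gcd(i,n)=1$); with these identities the hypothesis $\Tr((\xi+\xi^{2^i})(\mu\nu)^{-2^i})=0$ becomes exactly $\Tr(\sigma F)=\Tr(\xi F)$. Now $\theta=\xi A+B$ forces $(A,B)=(0,1),(1,0),(1,1)$ for $\theta=1,\xi,1+\xi$ respectively, but expanding $A$ and $B$ with the identities above shows $A=0$ whenever $\theta\in\{\xi,1+\xi\}$ and $B=0$ when $\theta=1$, contradicting the required values in all three cases. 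Hence $\theta=0$, so $b=0$ and $(x,c)=(0,0)$, proving $Y\cap Z=\{(0,0)\}$.

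I expect the main obstacle to be the bookkeeping in the second paragraph: reducing $A$ and $B$ to the clean trace expressions while correctly tracking the Frobenius twists on the $\F_{2^3}$-elements $\xi,\rho,\sigma$, and then verifying that the single scalar hypothesis is precisely what is needed to eliminate all three residual values of $\theta$ simultaneously. Once the three trace identities are in place the case check is routine.
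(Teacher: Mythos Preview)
Your proof is correct and follows essentially the same route as the paper's: both set $\lambda=\mu\nu$, derive the scalar equation $\lambda^{2^i+1}b=\xi A+B$ with $A,B\in\F_2$, observe that $\theta:=\lambda^{2^i+1}b\in\{1,\xi,1+\xi\}$, and then use Frobenius identities together with the hypothesis $\Tr((\xi+\xi^{2^i})\lambda^{-2^i})=0$ to show $B=0$ when $\theta=1$ and $A=0$ when $\theta\in\{\xi,1+\xi\}$. The only cosmetic difference is that you isolate the trace identities (your $E,F,\rho,\sigma$ relations) before the case split, whereas the paper performs the same Frobenius manipulations inline within each case.
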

    \begin{proof}
        It follows from Proposition \ref{prop-xi} that $Y$ is a WZ space of $f$ 
        and from Proposition \ref{prop-traceFm} with $m=n$ that $Z$ is a WZ space of $f$.

        Suppose towards a contradiction that there exists an element, different from $(0,0)$, in $Y\cap Z$. Then for this element, we have $x = \nu(b + b^{\frac{1}{2^i}})$. It follows that $b\neq 0$ and 
        \begin{equation}
            \label{b-comp}
            (\mu\nu)^{2^i+1} b = \xi\Tr(\mu\nu(b+b^{\frac{1}{2^i}})) + \Tr(\xi^{2^i}\mu\nu(b+b^{\frac{1}{2^i}})).
        \end{equation}
        By looking at the right-hand side of \eqref{b-comp} we can see that $(\mu\nu)^{2^i+1}b \in \{ 0,1,\xi, \xi + 1 \}$. But since $\mu \neq 0$, $\nu \ne 0$, and $b \ne 0$, we have $(\mu\nu)^{2^i+1}b \in \{ 1,\xi, \xi+1 \}$. 
        We will look at these three cases below.
        
        First assume that
        $(\mu\nu)^{2^i+1}b = 1$. Observe that
        \begin{align*}
            \Tr(\xi^{2^i}\mu\nu(b + b^{\frac{1}{2^i}})) 
            &= \Tr(\xi^{2^i}\mu\nu(\mu\nu)^{-(2^i+1)} + \xi^{2^i}\mu\nu(\mu\nu)^{-\frac{2^i+1}{2^i}})\\
            &= \Tr(\xi^{2^i}(\mu\nu)^{-2^i}) + \Tr(\xi^{2^i}(\mu\nu)^{-\frac{1}{2^i}})\\
            &= \Tr(\xi^{2^i}(\mu\nu)^{-2^i}) + \Tr\!\left(\xi^{2^{3i}}(\mu\nu)^{-\frac{2^{2i}}{2^i}}\right)\\
            &= \Tr(\xi^{2^i}(\mu\nu)^{-2^i}) + \Tr(\xi^{2^{3i}}(\mu\nu)^{-2^i})\\
            &= \Tr((\xi^{2^i} + \xi)(\mu\nu)^{-2^i}) = 0
        \end{align*} 
        which contradicts \eqref{b-comp}.

        Finally assume that
        $(\mu\nu)^{2^i+1}b =z$
        and $z\in\{\xi,\xi+1\}$. Then
        \begin{align*}
            \Tr(\mu\nu(b + b^{\frac{1}{2^i}})) 
            &= \Tr(z\mu\nu(\mu\nu)^{-(2^i+1)} + z^{\frac{1}{2^i}}\mu\nu(\mu\nu)^{-\frac{2^i+1}{2^i}})\\
            &= \Tr(z(\mu\nu)^{-2^i}) + \Tr(z^{\frac{1}{2^i}}(\mu\nu)^{-\frac{1}{2^i}})\\
            &= \Tr(z(\mu\nu)^{-2^i}) + \Tr\!\left(z^{\frac{2^{2i}}{2^i}}(\mu\nu)^{-\frac{2^{2i}}{2^i}}\right)\\
            &= \Tr(z(\mu\nu)^{-2^i}) + \Tr(z^{2^i}(\mu\nu)^{-2^i})\\
            &= \Tr((\xi + \xi^{2^i})(\mu\nu)^{-2^i}) = 0
        \end{align*} 
        which again contradicts \eqref{b-comp}.

        Since all cases are exhausted, the only element in the intersection is $(0,0)$ and 
        the proof is complete. 
    \end{proof}

\begin{proposition} \label{TI-hard-2}
        Let $n=3k$ where $k$ is odd and $f:\F_{2^n} \rightarrow \F_{2^n}$, $f(x) = x^{2^i+1}$, with $\gcd(i,n)=1$.
        Let $$Y = \{(\nu (b + b^{\frac{1}{2^i}}), \nu^{2^i+1} b) : b\in \F_{2^n} \}$$ with $\nu$ a fixed element of $\F_{2^n}^*$. 
        Suppose $\xi$ is a fixed primitive element of $\F_{2^3}\subset\F_{2^n}$. 
        Let $$Z = X \times S$$ with $S = \textup{span}_{\F_2}\{ \mu, \xi\mu \}$ for some fixed $\mu\in\F_{2^n}^*$ and $Z$ is constructed by applying Proposition \ref{prop-comp}.
        Suppose also that $\Tr( (\xi + \xi^{2^i})\mu^\frac{2^i}{2^i+1}\nu^{-2^i} ) = 1.$
        Then $Y$ and $Z$ are WZ spaces of $f$ and the pair $\{ Y,Z \}$ intersects trivially.
    \end{proposition}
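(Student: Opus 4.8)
The plan is to recognize $Y$ and $Z$ as instances of constructions already established, and then to rule out a nonzero common element by a short case analysis driven by the trace hypothesis. First, $Y$ is exactly the space produced by Proposition~\ref{prop-traceFm} in the special case $m=n$ (with the element denoted $\mu$ there taken equal to $\nu$): then $\Tr_m^n$ is the identity, so the defining condition $\Tr_m^n(a)=b+b^{1/2^i}$ reads $a=b+b^{1/2^i}$, giving $Y$; hence $Y$ is a WZ space of $f$. For $Z$, observe that $S=\textup{span}_{\F_2}\{\mu,\xi\mu\}=\mu\cdot\textup{span}_{\F_2}\{1,\xi\}$, where $\textup{span}_{\F_2}\{1,\xi\}$ is an additive subspace of $\F_{2^3}$ and hence $i$-compatible by Corollary~\ref{f8}; by Example~\ref{ex-triv-compa}(ii) it follows that $S$ is $i$-compatible, so Proposition~\ref{prop-comp} applies and $Z=X\times S$ is a WZ space of $f$. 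Using Lemma~\ref{f8linear} to evaluate $s^{-1/(2^i+1)}=s^{2^i}$ for $s\in\F_{2^3}$, one finds $S^{-1/(2^i+1)}=\mu^{-1/(2^i+1)}\cdot\{0,1,\xi^{2^i},1+\xi^{2^i}\}$, so that
\[
X=\{x\in\Fn : \Tr(\mu^{-1/(2^i+1)}x)=0 \ \text{and}\ \Tr(\mu^{-1/(2^i+1)}\xi^{2^i}x)=0\}.
\]

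Next I would take a nonzero $(p,q)\in Y\cap Z$ and derive a contradiction. Membership in $Y$ gives $p=\nu(b+b^{1/2^i})$ and $q=\nu^{2^i+1}b$ for some $b\in\Fn$; if $q=0$ then $b=0$ and $p=0$, so we may assume $q\neq0$, hence $b\neq0$. Since $q\in S\setminus\{0\}$, we get $z:=\mu^{-1}q\in\{1,\xi,1+\xi\}$. Substituting $b=\nu^{-(2^i+1)}\mu z$ into the formula for $p$ and simplifying exponents gives $p=\nu^{-2^i}\mu z+\nu^{-1/2^i}\mu^{1/2^i}z^{1/2^i}$. Writing $w:=\mu^{1/(2^i+1)}\nu^{-1}$, multiplying $p$ by $\mu^{-1/(2^i+1)}$, respectively by $\mu^{-1/(2^i+1)}\xi^{2^i}$, and reducing inside the trace by repeated use of $\Tr(y)=\Tr(y^{2^j})$ and of $\xi^{2^{3i}}=\xi$, the two conditions $p\in X$ turn into
\begin{align*}
\Tr\!\left(w^{2^i}(z+z^{2^i})\right)&=0,\\
\Tr\!\left(w^{2^i}(\xi^{2^i}z+\xi z^{2^i})\right)&=0,
\end{align*}
while the hypothesis $\Tr\big((\xi+\xi^{2^i})\mu^{2^i/(2^i+1)}\nu^{-2^i}\big)=1$ is precisely $\Tr\big((\xi+\xi^{2^i})w^{2^i}\big)=1$.

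To conclude, I would dispose of the three values of $z$. If $z=1$, then $z+z^{2^i}=0$, so the first condition is vacuous while the second reads $\Tr\big(w^{2^i}(\xi^{2^i}+\xi)\big)=0$, contradicting the hypothesis. If $z=\xi$ or $z=1+\xi$, then (using $1^{2^i}=1$) $z+z^{2^i}=\xi+\xi^{2^i}$, so the first condition reads $\Tr\big(w^{2^i}(\xi+\xi^{2^i})\big)=0$, again contradicting the hypothesis. All cases being contradictory, $Y\cap Z=\{(0,0)\}$.

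I expect the main obstacle to be purely computational: the exponent bookkeeping that reduces $\Tr(\mu^{-1/(2^i+1)}p)$ and $\Tr(\mu^{-1/(2^i+1)}\xi^{2^i}p)$ to the clean expressions above, where the normalizations must be tracked carefully so that the hypothesis appears in exactly the form $\Tr((\xi+\xi^{2^i})w^{2^i})=1$. Conceptually the argument is short once one notices that $z+z^{2^i}\in\{0,\ \xi+\xi^{2^i}\}$ for every $z\in\{1,\xi,1+\xi\}$, which is precisely what lets the single trace hypothesis rule out all three cases at once.
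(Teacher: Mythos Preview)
Your proof is correct and follows essentially the same approach as the paper: both identify $Y$ via Proposition~\ref{prop-traceFm} with $m=n$ and $Z$ via Proposition~\ref{prop-comp} (using Corollary~\ref{f8} and Example~\ref{ex-triv-compa}(ii) for $i$-compatibility of $S$), then case-split on $z\in\{1,\xi,1+\xi\}$ and obtain a contradiction from the trace hypothesis through the $X$-membership conditions. The only cosmetic difference is that you introduce the abbreviation $w=\mu^{1/(2^i+1)}\nu^{-1}$ and record both $X$-conditions uniformly before the case split, whereas the paper selects, in each case, the particular $\alpha\in S^{-1/(2^i+1)}$ that yields the contradiction and computes directly; the underlying computations are identical.
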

    \begin{proof}
        It follows from Proposition \ref{prop-traceFm} with $m=n$ that $Y$ is a WZ space of $f$.

        Note that $S$ is $i$-compatible by Lemma~\ref{f8}
        and Example~\ref{ex-triv-compa}(ii):
         If $T = \textup{span}_{\F_2}\{1,\xi \}$ then $S = \mu T$. From Proposition \ref{prop-comp} we have 
        $$X = \{ x\in \F_{2^n} : (\forall \alpha \in S^{-\frac{1}{2^i+1}}) \Tr(\alpha x) = 0 \}$$
        where 
        $$S^{-\frac{1}{2^i+1}} = \{ 0, \mu^{-\frac{1}{2^i+1}}, (\xi\mu)^{-\frac{1}{2^i+1}}, ((1+\xi)\mu)^{-\frac{1}{2^i+1}}\}.$$
        Suppose towards a contradiction that $(\nu(b+b^{\frac{1}{2^i}}), \nu^{2^i+1}b)$ is a nonzero element belonging to $Y\cap Z$.
        Then $\nu^{2^i+1}b \in S\setminus\{ 0\} = \{\mu, \xi\mu, (1+\xi)\mu\}$. We will look at
        these three cases below. 

        First assume $\nu^{2^i+1}b = \mu$. Take 
        $\alpha = (\xi\mu)^{-\frac{1}{2^i+1}} \in S^{-\frac{1}{2^i+1}}$. 
        By Lemma~\ref{f8linear} we have $\alpha = \xi^{2^i}\mu^{-\frac{1}{2^i+1}}$. 
        Then by the condition defining $X$, we should have 
        $\Tr(\alpha\nu(b+b^{\frac{1}{2^i}})) = 0$. But
        \begin{align*}
            \Tr(\alpha\nu(b+b^{\frac{1}{2^i}}))
            &= \Tr(\xi^{2^i}\mu^{-\frac{1}{2^i+1}}\nu(\mu\nu^{-2^i-1} + \mu^{\frac{1}{2^i}}\nu^{-\frac{1}{2^i} -1}))\\
            &= \Tr(\xi^{2^i}(\mu^{\frac{2^i}{2^i+1}}\nu^{-2^i} + \mu^{\frac{1}{(2^i+1)2^i}}\nu^{-\frac{1}{2^i}}))\\
            &= \Tr(\xi^{2^i}\mu^\frac{2^i}{2^i+1}\nu^{-2^i}) + \Tr(\xi^{2^{3i}}\mu^\frac{2^{2i}}{(2^i+1)2^i}\nu^{-\frac{2^{2i}}{2^i}})\\
            &= \Tr(\xi^{2^i}\mu^\frac{2^i}{2^i+1}\nu^{-2^i}) + \Tr(\xi\mu^\frac{2^i}{2^i+1}\nu^{-2^i})\\
            &= \Tr((\xi + \xi^{2^i})\mu^\frac{2^i}{2^i+1}\nu^{-2^i}) = 1.
        \end{align*}

        Finally assume that $\nu^{2^i+1}b = z\mu$ and $z\in\{ \xi ,\xi + 1 \}$.
        Take $\alpha = \mu^{-\frac{1}{2^i+1}} \in S^{-\frac{1}{2^i+1}}$. 
        By the condition defining $X$, we should have 
        $\Tr(\alpha\nu(b+b^{\frac{1}{2^i}})) = 0$. But
        \begin{align*}
            \Tr(\alpha\nu(b+b^{\frac{1}{2^i}}))
            &= \Tr(\mu^{-\frac{1}{2^i+1}}\nu(z\mu\nu^{-2^i-1} + z^{\frac{1}{2^i}}\mu^{\frac{1}{2^i}}\nu^{-\frac{1}{2^i}-1}))\\
            &= \Tr(z\mu^{\frac{2^i}{2^i+1}}\nu^{-2^i} + z^{\frac{1}{2^i}}\mu^{\frac{1}{(2^i+1)2^i}}\nu^{-\frac{1}{2^i}})\\
            &= \Tr(z\mu^{\frac{2^i}{2^i+1}}\nu^{-2^i}) + \Tr(z^{\frac{2^{2i}}{2^i}}\mu^{\frac{2^{2i}}{(2^i+1)2^i}}\nu^{-\frac{2^{2i}}{2^i}})\\
            &= \Tr(z\mu^{\frac{2^i}{2^i+1}}\nu^{-2^i}) + \Tr(z^{2^i}\mu^{\frac{2^i}{2^i+1}}\nu^{-2^i})\\
            &= \Tr((\xi+\xi^{2^i})\mu^{\frac{2^i}{2^i+1}}\nu^{-2^i}) = 1.\\
        \end{align*}

        Since all cases are exhausted, the only element in the intersection is $(0,0)$ and we are done. 
    \end{proof}

In Propositions \ref{TI-hard-1} and \ref{TI-hard-2}
it is not difficult to see that if $\xi$ is drawn
uniformly at random from $\F_{2^3}\setminus \{0,1\}$
and $\mu,\nu$ are drawn
uniformly at random from $\F_{2^n}^*$,
then the probability that the trace condition
is satisfied is close to $1/2$. In other words,
we expect that a trivially intersecting pair
of WZ spaces is obtained for about every other set
of random parameters generated in this way.

\section{Applications}
    
    \subsection{Classifying EA classes of functions}
    
    Functions $f$ and $g$ mapping $\Fn$ to $\Fn$ are
    {\em extended affine equivalent (EA equivalent)} if there exist
    affine permutations $A_1,A_2$ of $\Fn$ and an affine function $A_3$
    such that $A_1(f(A_2(x)))+A_3(x)=g(x)$ 
    for all $x\in\Fn$. It is known that EA equivalent functions
    are also CCZ equivalent, but partitioning CCZ classes
    into EA classes is in general a hard problem.
    This problem was addressed by Canteaut and Perrin \cite{CP}
    by studying the structure of Walsh zeros of functions.
    WZ spaces play an important role in their investigations.
    
    To bring up a more specific example, in \cite[Lemma~12]{CP}
    it is stated that the CCZ class of $f(x)=x^3$ on $\F_{2^5}$
    contains three EA classes, and this is based on the classification
    of 64 WZ spaces that according to \cite{CP} were found
    experimentally. Here we can give a computer-free
    description of these spaces: 32 of them are
    obtained from Proposition~\ref{prop-traceFm} with $m=n=5$,  
    and the remaining 32 of them are obtained from Proposition~\ref{prop-comp}
    with $S=\mu\F_2$ where $\mu\in\F_{2^5}$.

    \subsection{Construction of new APN permutations}
    
    If we know two trivially intersecting WZ spaces
    for an APN function $f:\Fn\rightarrow\Fn$,
    then Proposition~\ref{prop-wz-disj} allows us
    to construct an APN permutation $f'$ of $\Fn$.
    Then $f'$ is CCZ equivalent to $f$, but in general
    it need not be EA equivalent to it. 
    
    Just for illustration we present a simple numerical example.
    For $n=9$,
    the algebraic degree of $f(x)=x^3$ is~2,
    and the algebraic degree of its compositional inverse
    $g(x)=x^{1/3}$, which is also an APN permutation,
    is~5. By applying Proposition~\ref{prop-wz-disj}
    along with constructions of trivially intersecting
    WZ spaces provided
    in Section~\ref{sec-Ti-pairs} above, we found
    APN permutations of $\F_{2^9}$ of algebraic degrees
    $2$, $4$ and $5$. Since the algebraic degree is preserved
    by EA equivalence, the APN permutations of degree $4$
    are not EA equivalent to $f$ or $g$.
    
    The constructions in Section~\ref{sec-Ti-pairs} work
    in arbitrary odd dimensions and for all Gold APN functions. It will be interesting
    to investigate how many EA inequivalent APN permutations they provide.
    
\section*{Acknowledgement}
	
	We thank Claude Carlet for pointing to us
	the current form of Proposition~\ref{prop-subfield-compatible}
	which was previously stated in a more special form. 
	We also thank
	L\'eo Perrin for releasing an updated version of 
	the sboxU software package \cite{sboxU} that is more
	widely compatible with various computer operating systems.

\end{document}